\newtheorem{Theorem}{Theorem}
\newtheorem{Remark}{Remark}
\newtheorem{Example}{Example}
\newtheorem{Corollary}{Corollary}
\newtheorem{Definition}{Definition}
\begin{document}
\title{ Any Positive Feedback Rate  Increases the Capacity of Strictly Less-Noisy Broadcast Channels}
\author{
  \IEEEauthorblockN{Youlong Wu}
  \IEEEauthorblockA{Comm. and Electr. Department\\
    Telecom ParisTech\\
    Paris, France\\
     youlong.wu@telecom-paristech.fr }
  \and
  \IEEEauthorblockN{Mich\`{e}le Wigger}
  \IEEEauthorblockA{Comm. and Electr. Department\\
    Telecom ParisTech\\
    Paris, France\\
    michele.wigger@telecom-paristech.fr }
}
\maketitle

\begin{abstract}
We propose two coding schemes for discrete memoryless broadcast channels (DMBCs) with rate-limited feedback from only one receiver. For any positive feedback rate and for the class of strictly less-noisy DMBCs, our schemes strictly improve over the no-feedback capacity region.
\end{abstract}


\section{Introduction}
We study the capacity region of discrete memoryless broadcast channels (DMBCs) with feedback. It is known that for \emph{physically degraded} DMBCs, feedback does not change the capacity region \cite{gamal'78}. In contrast, there exist a few specific examples of not physically degraded DMBCs where feedback strictly enlarges the capacity region \cite{kramer,dueck,wigger}. For general DMBCs with feedback, achievable regions have been proposed in \cite{kramer, wigger,venkataramananpradhan11}. But due to their complexity it is hard to evaluate these regions or to obtain general insights from them.

The usefulness of feedback has also been shown for memoryless {Gaussian} broadcast channels (BCs) \cite{ozarow,bhaskaran}. Achievable regions and the  asymptotic high-SNR sum-capacity of Gaussian BCs have been presented in \cite{minero,wigger'10}.

In this paper we propose two coding schemes for general DMBCs with feedback which lead to relatively simple achievable regions (with only two auxiliary random variables). In our schemes it suffices that there is a \emph{rate-limited} feedback link from the weaker receiver and no feedback link from the stronger receiver.

For the class of \emph{strictly less-noisy} DMBCs (see Definition~1 ahead), our schemes strictly improve over the no-feedback capacity region for any positive feedback rate, no matter how small. As we will see, the class of strictly less-noisy DMBCs includes, for example, all asymmetric {binary symmetric BCs} (BS-BCs) and all asymmetric {binary erasure BCs} (BE-BCs).


We conclude this section with some notation. We write $Z\sim \textnormal{Bern}(p)$ to indicate that $Z$ is  Bernoulli-$p$, and we use the definitions $\bar{a}:=(1-a)$ and $a*b:=\bar{a}b+a\bar{b}$, for  $a, b\in[0,1]$. Also, for any positive integer $k$, $A^k$ stands for the $k$-tuple $A_1,\ldots, A_k$.
Given a set $\set{S}\in \Reals^2$, we denote by $\textnormal{bd}(\set{S})$ and $\textnormal{int}(\set{S})$  the boundary and the interior of $\set{S}$.

\section{Channel model}\label{sec:system}

\begin{figure}[!t]
\centering
\includegraphics[width=0.45\textwidth]{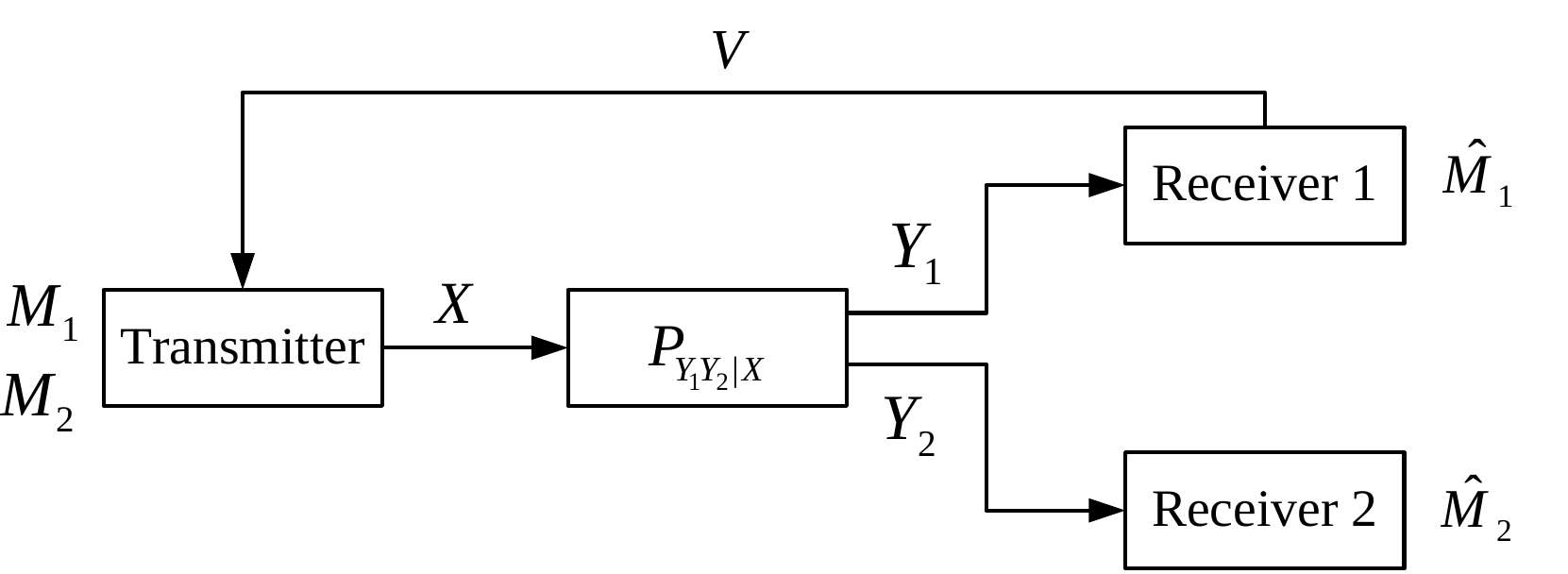}
\caption{Broadcast channel with rate-limited feedback from Receiver 1.} \label{fig:BCFBmodel}
\vspace{-5mm}
\end{figure}

Communication takes place over a DMBC with rate-limited feedback from Receiver 1, see Figure~\ref{fig:BCFBmodel}. The setup is characterized by the finite input alphabet $\set{X}$, the finite output alphabets $\set{Y}_1$ and $\set{Y}_2$, the channel law $P_{Y_1Y_2|X}$, and the nonnegative feedback rate $R_\textnormal{FB}$.
Specifically, if at  discrete-time $t$ the transmitter sends the channel input $x_t\in \set{X}$, then Receiver~$i\in\{1,2\}$ observes the output $Y_{i,t}\in \set{Y}_i$, where the pair $(Y_{1,t}, Y_{2,t})\sim P_{Y_1Y_2|X}(\cdot, \cdot|x_t)$. Also, after observing $Y_{1,t}$, Receiver~1 can send a feedback signal $V_{t}\in \set{V}_t$ to the transmitter, where $\set{V}_t$ denotes the finite alphabet of $V_t$ and is a design parameter of a scheme. The feedback link is assumed to be instantaneous and noiseless---i.e., the transmitter observes $V_{t}$ before it has to produce the next input $X_{t+1}$---but \emph{rate-limited} to $R_{\textnormal{FB}}$ bits on average. Thus,
if the transmission takes place over a total blocklength $N$, then
\begin{equation} \label{consFB0}
|\set{V}_1|\cdots|\set{V}_N|\leq 2^{NR_\textnormal{FB}}.
\end{equation}

The goal of the communication is that the transmitter conveys two independent private messages $M_1\in \{1,\ldots,\lfloor 2^{NR_1} \rfloor\}$ and $M_2\in \{1,\ldots,\lfloor 2^{NR_2} \rfloor\}$, to Receiver 1 and 2, respectively. Each $M_i$, $i=1,2$, is uniformly distributed over the set $\set{M}_i:=\{1,\ldots,\lfloor 2^{NR_i} \rfloor\}$, where $R_i$ denotes the private rate of transmission of Receiver $i$.

The transmitter is comprised of a sequence of encoding functions $\big\{f^{(N)}_t\big\}^N_{t=1}$ of the form
$ f_t^{(N)}: \set{M}_1\times\set{M}_2\times{\set{V}}_1\times\cdots \times{\set{V}}_{t-1}\to \mathbb{R}$
that is used to produce the channel inputs as
\begin{equation}
X_t=f^{(N)}_t\big(M_1,M_2,{V}_1,\ldots,{V}_{t-1}\big), \qquad t\in\{1,\ldots, N\}.
\end{equation}

Receiver~1 is comprised of a sequence of feedback-encoding functions $\{\psi^{(N)}_t\}_{t=1}^N$ of the form
$
\psi^{(N)}_t: \mathbb{R}^{t}\to \set{V}_t
$
that is used to produce the symbols
\begin{equation}
V_t=\psi^{(N)}_t(Y_{1,1},\ldots,Y_{1,t}), \qquad t\in\{1,\ldots, N\},
\end{equation}
sent over the feedback link,
and of a decoding function
$
\Phi^{(N)}_1: \mathbb{R}^N\to\set{M}_1$
used to produce a guess of Message $M_1$:
\begin{equation}
\hat{M}_1=\Phi^{(N)}_1(Y_1^N).
\end{equation}

Receiver~2 is only comprised of a decoding function $\Phi^{(N)}_2: \mathbb{R}^N\to\set{M}_2$
used to produce a guess of Message $M_2$:
\begin{equation}
\hat{M}_2=\Phi^{(N)}_2(Y_2^N).
\end{equation}

A rate region $(R_1, R_2)$ is called achievable if for every blocklength $N$, there exists a set of $N$ encoding functions $\big\{f_t^{(N)}\big\}_{t=1}^N$, two decoding functions $\Phi^{(N)}_1$ and $\Phi^{(N)}_2$, $N$ feedback alphabets $\set{V}_1,\ldots, \set{V}_N$ satisfying~\eqref{consFB0}, and $N$ feedback-encoding functions $\big\{\psi_{t}^{(N)} \big\}_{t=1}^N$  such that the error probability
\begin{equation}\label{errorprob}\Pr(M_1\neq \hat{M}_1~\textnormal{or}~M_2 \neq \hat{M}_2)\end{equation}
tends to zero as the blocklength $N$ tends to infinity. The closure of the set of achievable rate pairs $(R_1, R_2)$ is called the \textit{feedback capacity region} and is denoted by $\set{C}_\textnormal{FB}(R_\textnormal{FB})$.

In the special case $R_\textnormal{FB}=0$ the feedback signals are constant and the setup is equivalent to a setup without feedback. We denote the capacity region for this setup by $\set{C}_\textnormal{NoFB}$. 

We are particularly interested in the following DMBCs:
\begin{Definition}
  A DMBC is called \emph{less-noisy} \cite{gamal'79} if
 \begin{equation}\label{LN}
  I(U;Y_2)\geq I(U;Y_1)
  \end{equation}
holds for all probability mass functions (pmf) $P_{UX}P_{Y_1Y_2|X}$. We call a DMBC \emph{strictly} less-noisy if (\ref{LN}) holds with strict inequality whenever $I(U;Y_1)>0$.
\end{Definition}
In general, the capacity region of DMBCs with and without feedback are unknown. For the class of less-noisy  BCs, the no-feedback capacity region $\C_\textnormal{NoFB}$ is known \cite{gamal'79}. It is the set of all nonnegative rate pairs $(R_1,R_2)$ that satisfy
\begin{subequations}\label{NoFB}
\begin{IEEEeqnarray}{rCl}
R_1&\leq& I(U;Y_1) \\
R_2&\leq& I(X;Y_2|U),
\end{IEEEeqnarray}
\end{subequations}
for some pmf $P_{UX}$, where the cardinality of the auxiliary random variable $U$ satisfies $|\set{U}|\leq \min\{|\set{X}|,|\set{Y}_1|,|\set{Y}_2|\}+1$.

We will also need these definitions. A BC is said \emph{physically degraded} if $X- Y_2- Y_1$ forms a Markov chain. For physically degraded BCs the capacity regions with and without feedback are the same and given by the constraints in \eqref{NoFB} \cite{gamal'78}.

For comparison, we introduce the notion of \emph{enhanced DMBC}, which is obtained from the original DMBC by revealing outputs $Y_1^n$ to Receiver 2. The enhanced DMBC is {physically degraded} and thus, with and without feedback, its capacity region is described by~\eqref{NoFB} where $Y_2$ needs to be replaced by $(Y_1,Y_2)$. We denote this capacity region by $\set{C}_\textnormal{Enh}$.

\section{Main Results}\label{sec:results}

\begin{Theorem}\label{theo1}
For less-noisy DMBCs, the capacity region $\set{C}_\textnormal{FB}(R_\textnormal{FB})$ includes the set $\set{R}_\textnormal{in,1}$ of all nonnegative rate pairs $(R_1,R_2)$ that satisfy
\begin{subequations}\label{rate1}
\begin{IEEEeqnarray}{rCl}
R_1&\leq& I(U;Y_1|Q) \label{1a}\\
R_1&\leq&I(U;Y_2|Q)-I(\tilde{Y};Y_1|UY_2Q) \label{1b}\\
R_2&\leq& I(X;\tilde{Y}Y_2|UQ) \label{1c}
\end{IEEEeqnarray}
\end{subequations}
for some pmf $P_QP_{U|Q}P_{X|UQ}P_{Y_1Y_2|X}P_{\tilde{Y}|UY_1Q}$ satisfying
\begin{equation}\label{fbQ1}
I(\tilde{Y};Y_1|UY_2Q)\leq {R}_\textnormal{FB}.
\end{equation}
\end{Theorem}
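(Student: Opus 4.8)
The plan is to establish achievability of $\set{R}_\textnormal{in,1}$ by a block-Markov scheme that combines superposition coding with a Wyner--Ziv compress-and-forward relay implemented through the feedback link. I would split the transmission into $B$ blocks of $n$ symbols each (so $N=nB$), fix a pmf $P_QP_{U|Q}P_{X|UQ}P_{Y_1Y_2|X}P_{\tilde{Y}|UY_1Q}$ satisfying \eqref{fbQ1}, and let $n,B\to\infty$ at the end. The guiding idea is that in each block Receiver~1 compresses its output $Y_1^n$ into a reconstruction sequence $\tilde{Y}^n$ in the Wyner--Ziv style (with $U^n$ as encoder side information and $(U^n,Y_2^n)$ as the intended decoder side information at Receiver~2), feeds the resulting bin index back to the transmitter, and the transmitter relays this index to Receiver~2 in the next block so that Receiver~2 can reconstruct $\tilde{Y}^n$ and use it, together with its own output $Y_2^n$, as an enhanced observation for decoding $M_2$.

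Concretely, for each $q^n$ I would generate a superposition codebook: cloud centers $U^n(m_1,l)$ indexed by the message $m_1\in\{1,\dots,2^{nR_1}\}$ and a forwarding index $l\in\{1,\dots,2^{nR_\textnormal{WZ}}\}$ with $R_\textnormal{WZ}=I(\tilde{Y};Y_1|UY_2Q)$, and for each cloud center a satellite codebook $X^n(m_1,l,m_2)$ indexed by $m_2$. Separately I would draw a Wyner--Ziv codebook of $2^{nI(\tilde{Y};Y_1|UQ)}$ sequences $\tilde{Y}^n$ per $(U^n,q^n)$ and partition them into $2^{nR_\textnormal{WZ}}$ bins. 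In block $b$ the transmitter, having received through feedback the bin index $l^{(b-1)}$ that Receiver~1 produced in block $b-1$, transmits $X^n\big(m_1^{(b)},l^{(b-1)},m_2^{(b)}\big)$. After the block, Receiver~1 (which has decoded its cloud center and hence $U^n$) uses the covering lemma to find a $\tilde{Y}^n$ jointly typical with $(U^n,Y_1^n)$---possible because the codebook rate exceeds $I(\tilde{Y};Y_1|UQ)$---and sends the corresponding bin index $l^{(b)}$ over the feedback link. Only $nR_\textnormal{WZ}$ feedback bits are needed per block, so the average feedback rate is $R_\textnormal{WZ}=I(\tilde{Y};Y_1|UY_2Q)\le R_\textnormal{FB}$, which is exactly \eqref{fbQ1}.

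The error analysis then yields the three rate bounds. The crucial observation for \eqref{1a} is that Receiver~1 \emph{already knows} the forwarding index $l^{(b-1)}$ (it computed and fed it back itself), so it need only resolve $m_1^{(b)}$ among $2^{nR_1}$ candidates by joint typicality, giving $R_1\le I(U;Y_1|Q)$. Receiver~2, by contrast, does not know $l^{(b-1)}$ and must decode the pair $\big(m_1^{(b)},l^{(b-1)}\big)$ from $Y_2^n$, requiring $R_1+R_\textnormal{WZ}\le I(U;Y_2|Q)$, i.e. \eqref{1b}; here the less-noisy hypothesis $I(U;Y_2|Q)\ge I(U;Y_1|Q)$ is what makes it sensible for Receiver~2 to carry the extra forwarding index on top of $m_1$. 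Having recovered $l^{(b-1)}$, Receiver~2 uses its side information $(U^n,Y_2^n)$ from block $b-1$ to identify the compression sequence $\tilde{Y}^n$ inside bin $l^{(b-1)}$; the binning/packing step succeeds precisely because the bin rate was reduced by $I(\tilde{Y};Y_2|UQ)$, that is because $R_\textnormal{WZ}=I(\tilde{Y};Y_1|UQ)-I(\tilde{Y};Y_2|UQ)$. Finally Receiver~2 decodes $m_2^{(b-1)}$ by seeking the unique satellite codeword jointly typical with $(U^n,\tilde{Y}^n,Y_2^n)$, which gives $R_2\le I(X;\tilde{Y}Y_2|UQ)$, namely \eqref{1c}.

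I expect the main obstacle to be the joint-typicality bookkeeping tying the Wyner--Ziv reconstruction to the satellite decoding: one must show that the $\tilde{Y}^n$ recovered by Receiver~2 is jointly typical with the \emph{transmitted} $X^n$, $U^n$, and $Y_2^n$, even though $\tilde{Y}^n$ is produced by a covering step at Receiver~1 rather than drawn i.i.d.\ jointly with these sequences. The standard remedy is a Markov-lemma argument exploiting the factorization $P_{\tilde{Y}|UY_1Q}$ together with simultaneous-typicality decoding, so that the reconstruction index and $m_2$ are resolved jointly and the relevant error events are controlled by the single-constraint exponents above. The remaining routine points---covering-lemma existence of $\tilde{Y}^n$, the vanishing overhead from initializing $l^{(0)}$ and from the undecoded $M_2^{(B)}$ in the last block (a $1/B$ rate loss), and a cardinality bound on $Q$---should pose no difficulty once the block-Markov error events are laid out carefully.
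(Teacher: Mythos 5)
Your proposal is correct and follows essentially the same route as the paper: a block-Markov superposition scheme in which Receiver~1's Wyner--Ziv bin index (compression of $Y_1^n$ given $U^n$, with $(U^n,Y_2^n)$ as decoder side information) is fed back and embedded in the next block's cloud-center index, Receiver~1 exploits its own knowledge of that index to get \eqref{1a}, and Receiver~2 decodes it jointly with $M_1$ to get \eqref{1b} before sliding back one block to recover $\tilde{Y}^n$ and $M_2$ for \eqref{1c}. The only cosmetic differences are that the paper keeps the bin rate $\tilde{R}$ as a free parameter and eliminates it by Fourier--Motzkin, and handles $|\mathcal{Q}|>1$ via successive phases with FIFO queues rather than a per-symbol time-sharing sequence; both yield the same region.
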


\begin{proof}
See Section \ref{sec:sche1}.
\end{proof}
\begin{Corollary}\label{rateCoro}
For less-noisy DMBCs, the capacity region $\set{C}_\textnormal{FB}(R_\textnormal{FB})$ includes the set of all nonnegative rate pairs $(R_1,R_2)$ that satisfy
\begin{subequations}
\begin{IEEEeqnarray}{rCl}
R_1&\leq& I(U;Y_1|Q) \label{1aQ}\\
R_2&\leq& I(X;\tilde{Y}Y_2|UQ) \label{1cQ}
\end{IEEEeqnarray}
\end{subequations}
for some pmf $P_QP_{U|Q}P_{X|UQ}P_{Y_1Y_2|X}P_{\tilde{Y}|UY_1Q}$ satisfying
\begin{equation}\label{fbQ2}
I(\tilde{Y};Y_1|UY_2Q) \leq\min\{R_\textnormal{FB}, I(U;Y_2|Q)-I(U;Y_1|Q)\}.
\end{equation}
\end{Corollary}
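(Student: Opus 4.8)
The plan is to derive the Corollary directly from Theorem~\ref{theo1} by showing that every rate pair admitted by the Corollary already lies in the region $\set{R}_\textnormal{in,1}$. Concretely, I would fix any pmf $P_QP_{U|Q}P_{X|UQ}P_{Y_1Y_2|X}P_{\tilde{Y}|UY_1Q}$ satisfying the tighter feedback constraint \eqref{fbQ2}, together with any nonnegative rate pair $(R_1,R_2)$ obeying \eqref{1aQ} and \eqref{1cQ}. The goal is then to verify that this same pmf and rate pair satisfy all three inequalities \eqref{1a}--\eqref{1c} as well as the feedback constraint \eqref{fbQ1} of Theorem~\ref{theo1}, so that $(R_1,R_2)\in\set{R}_\textnormal{in,1}\subseteq\set{C}_\textnormal{FB}(R_\textnormal{FB})$.

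Two of these verifications are immediate. Since the right-hand side of \eqref{fbQ2} is a minimum whose first argument is $R_\textnormal{FB}$, we have $I(\tilde{Y};Y_1|UY_2Q)\leq R_\textnormal{FB}$, which is exactly the feedback constraint \eqref{fbQ1}. Moreover, constraints \eqref{1a} and \eqref{1c} coincide verbatim with the Corollary's constraints \eqref{1aQ} and \eqref{1cQ}, so they hold by assumption. It therefore remains only to establish the extra bound \eqref{1b} on $R_1$.

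The crux is to show that \eqref{1b} is automatically implied by \eqref{1aQ} once the \emph{second} argument of the minimum in \eqref{fbQ2} is invoked. Indeed, \eqref{fbQ2} gives
\begin{equation*}
I(\tilde{Y};Y_1|UY_2Q)\leq I(U;Y_2|Q)-I(U;Y_1|Q),
\end{equation*}
which rearranges to $I(U;Y_1|Q)\leq I(U;Y_2|Q)-I(\tilde{Y};Y_1|UY_2Q)$. Chaining this with \eqref{1aQ} yields $R_1\leq I(U;Y_1|Q)\leq I(U;Y_2|Q)-I(\tilde{Y};Y_1|UY_2Q)$, which is precisely \eqref{1b}. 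Hence all constraints of Theorem~\ref{theo1} are met and the rate pair is achievable.

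Since the argument is a one-line implication, I do not expect a substantive obstacle; the only point I would check for \emph{consistency} is that the second argument of the minimum in \eqref{fbQ2} is nonnegative, so the constraint is not vacuous. This follows from the less-noisy property: applying Definition~1 to the conditional pmf $P_{UX|Q=q}$ for each value $q$ gives $I(U;Y_2|Q=q)\geq I(U;Y_1|Q=q)$, and averaging over $Q$ yields $I(U;Y_2|Q)\geq I(U;Y_1|Q)$. Intuitively, the Corollary trades away the bound \eqref{1b} on $R_1$ in exchange for a slightly stronger restriction on the feedback-compression term, chosen so that \eqref{1a} becomes the binding constraint on $R_1$.
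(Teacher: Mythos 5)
Your proposal is correct and matches the paper's (implicit) reasoning: the Corollary is stated as an immediate consequence of Theorem~\ref{theo1}, obtained exactly by observing that the second argument of the minimum in \eqref{fbQ2} makes constraint \eqref{1b} redundant given \eqref{1aQ}, while the first argument recovers \eqref{fbQ1}. The chain $R_1\leq I(U;Y_1|Q)\leq I(U;Y_2|Q)-I(\tilde{Y};Y_1|UY_2Q)$ is precisely the intended one-line derivation.
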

\begin{Example}
 Consider asymmetric BS-BCs, where $P_{Y_1Y_2|X}$ is described by
   \begin{IEEEeqnarray}{rCl}
   Y_i&=&X\oplus Z_i, \quad i\in{1,2}
 \end{IEEEeqnarray}
for $Z_1\sim \textnormal{Bern}(p_1)$ and $Z_2\sim\textnormal{Bern}(p_2)$ independent of each other with $0<p_2<p_1<1/2$. We evaluate the region $\set{R}_\textnormal{in,1}$ in Theorem~\ref{theo1}  for distributions of the form
\begin{subequations}
   \begin{IEEEeqnarray}{rCl}
   X&=& U\oplus W_1\\
    \tilde{Y}&=& U\oplus Y_1\oplus W_2
\end{IEEEeqnarray}
 \end{subequations}
with $U\sim\textnormal{Bern}(1/2)$, $W_1\sim\textnormal{Bern}(\alpha)$, and $W_2\sim\textnormal{Bern}(\beta)$ independent of each other and of the pair $(Z_1,Z_2)$ and $\alpha, \beta\in[0,1/2]$. This results in the region of all nonnegative rate pairs $(R_1,R_2)$ that satisfy
 \begin{subequations}\label{eq:regBS}
   \begin{IEEEeqnarray}{rCl}
   R_1&\leq&  1-H(\alpha *p_1)\\
   R_1&\leq& 1+H(\beta)-H(\alpha_1,\alpha_2,\alpha_3,\alpha_4)\nonumber\\
   R_2&\leq&  H(\alpha_1,\alpha_2,\alpha_3,\alpha_4)-H(p_2)-H(p_1*\beta) 
 \end{IEEEeqnarray}
 \end{subequations}
for some $\alpha, \beta\in[0,1/2]$ satisfying
 \[H(\alpha_1,\alpha_2,\alpha_3,\alpha_4)-H(\alpha*{p_2})-H(\beta)\leq R_\textnormal{FB}\]
 where
\begin{IEEEeqnarray*}{rCl}
 \alpha_1&=&(p_1*\beta)p_2\alpha +(1-p_1*\beta)\bar{p_2}\bar{\alpha}\nonumber\\
 \alpha_2&=&(p_1*\beta)\bar{p_2}\alpha+(1-p_1*\beta)p_2\bar{\alpha}\nonumber\\
 \alpha_3&=&(p_1*\beta)\bar{p_2}\bar{\alpha}+(1-p_1*\beta)p_2\alpha\nonumber\\
 \alpha_4&=&(p_1*\beta)p_2\bar{\alpha}+(1-p_1*\beta)\bar{p_2}\alpha .
  \end{IEEEeqnarray*}
Figure \ref{fig:rateBSBC} compares this region to $\set{C}_\textnormal{NoFB}$ when $p_2=0.1$, $p_1\in\{0.2,0.25,0.3\}$, and $R_\textnormal{FB}=0.85$.
\end{Example}

 \begin{figure}[!t]
\centering
\includegraphics[width=0.5\textwidth]{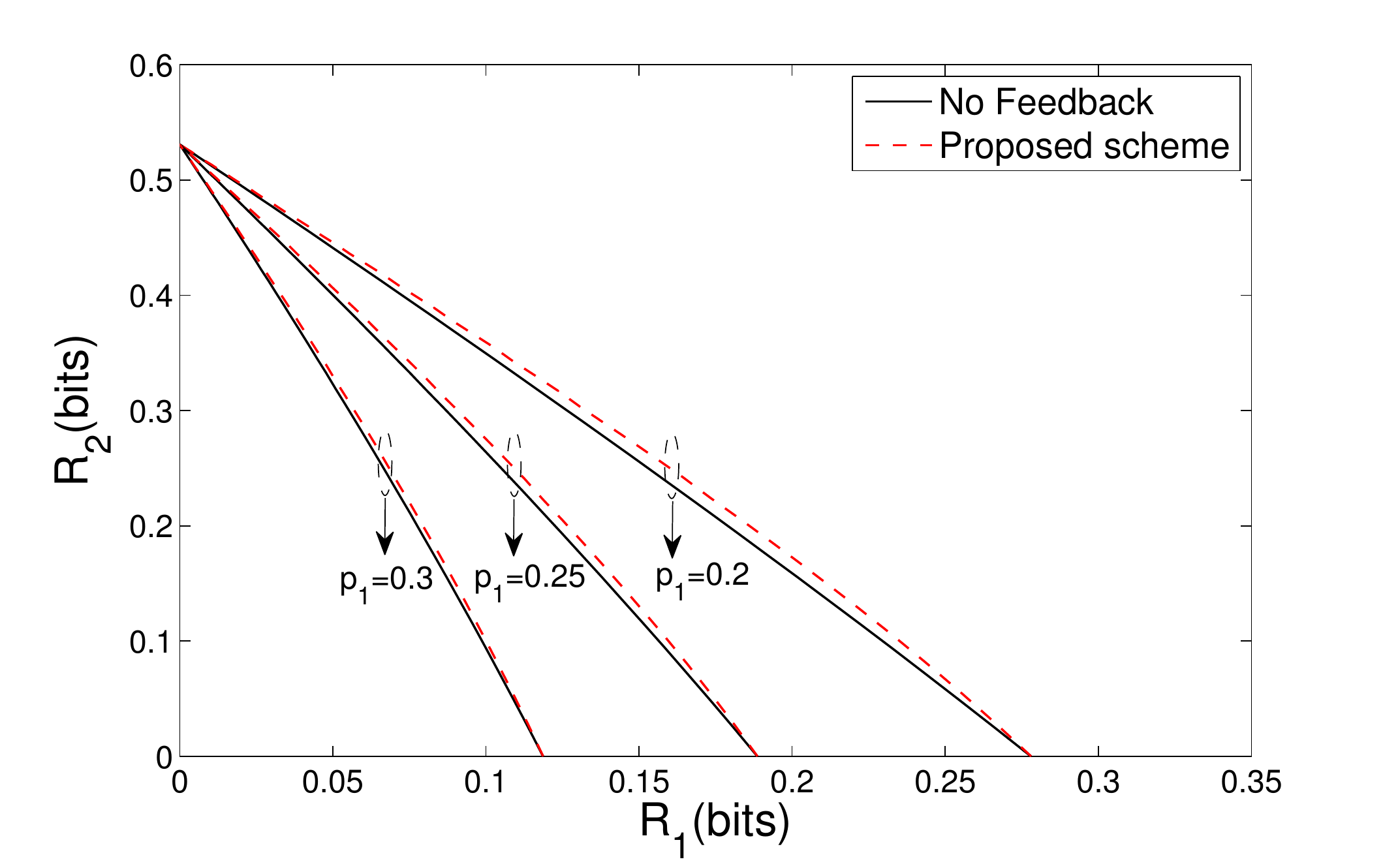}
\caption{$\set{C}_{\textnormal{NoFB}}$  and the achievable region in~\eqref{eq:regBS} are plotted for BS-BCs with parameters $p_2=0.1$ and $p_1\in\{0.2,0.25,0.3\}$ and feedback rate $R_\textnormal{FB}=0.85$.} \label{fig:rateBSBC}
\vspace{-4mm}
\end{figure}

\begin{Theorem}\label{theo2}
For  less-noisy DMBCs, the capacity region $\set{C}_{\textnormal{FB}}(R_{\textnormal{FB}})$ includes the set $\set{R}_\textnormal{in,2}$ of all nonnegative rate pairs $(R_1,R_2)$ that satisfy
\begin{subequations}
\begin{IEEEeqnarray}{rCl}
{R}_1&\leq&I(U;Y_1|Q)\label{2a}\\
{R}_1&\leq&I(X;Y_2|Q)+I(X;\tilde{Y}|UY_2Q)\nonumber \\ & & -I(\tilde{Y};Y_1|UY_2Q)\label{2b}\\
{R}_2&\leq&I(X;Y_2\tilde{Y}|UQ)\label{2c}\\
R_1+{R}_2&\leq&I(X;Y_2|Q)+I(XY_2;\tilde{Y}|UQ)\nonumber\\
&~&~+I(X;\tilde{Y}|UY_2Q)-I(\tilde{Y};Y_1|UY_2Q) \label{2d}\IEEEeqnarraynumspace
\end{IEEEeqnarray}
\end{subequations}
for some pmf $P_QP_{U|Q}P_{X|UQ}P_{Y_1Y_2|X}P_{\tilde{Y}|UY_1Q}$ satisfying
\begin{equation} \label{fbQ2inner2}
I(\tilde{Y};Y_1|UY_2Q)-I(X;\tilde{Y}|UY_2Q)\leq {R}_\textnormal{FB}.
\end{equation}
\end{Theorem}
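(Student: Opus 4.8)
The plan is to prove Theorem~\ref{theo2} with a block-Markov scheme of $B$ blocks of length $n$ that keeps the superposition codebook and the compress-and-forward use of the feedback link underlying Theorem~\ref{theo1}, but equips Receiver~2 with a simultaneous (joint) decoder in place of a successive one. For each block I would draw a time-sharing sequence $Q^n$, cloud centers $U^n(m_1)$ from $P_{U|Q}$, satellites $X^n(m_1,m_2)$ from $P_{X|UQ}$, and, superimposed on each $U^n$, a Wyner--Ziv codebook of $\tilde Y^n$-sequences of rate $I(\tilde Y;Y_1|UQ)$ drawn from $P_{\tilde Y|UQ}$ and sorted into bins. Throughout I would exploit the Markov chain $\tilde Y-(U,Y_1,Q)-(X,Y_2)$ implied by the factorization $P_{\tilde Y|UY_1Q}$, which underlies the mutual-information identities used below.

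First I would treat Receiver~1: it decodes $m_1$ from $(U^n,Y_1^n)$, reliably whenever $R_1\le I(U;Y_1|Q)$, establishing~\eqref{2a}. It then quantizes its output by picking the $\tilde Y^n$-codeword jointly typical with $(U^n,Y_1^n)$---feasible by the covering lemma since the codebook rate is $I(\tilde Y;Y_1|UQ)$---and sends over the rate-limited link only the index of the bin containing that codeword. The transmitter forwards this bin index to Receiver~2 inside the next block, and a backward-decoding sweep lets Receiver~2 treat the bin index of each block as already known when it decodes that block.

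The heart of the argument is the error analysis of Receiver~2's simultaneous decoder, which---given the decoded $U^n$ and the known bin---searches for a pair $(m_2,\tilde Y^n)$, with $\tilde Y^n$ in that bin, such that $(U^n,X^n,\tilde Y^n,Y_2^n)$ is jointly typical. A wrong satellite $X^n$ with the correct compression codeword yields, by the packing lemma, the bound $R_2\le I(X;\tilde Y Y_2|UQ)$, i.e.~\eqref{2c}. A correct $X^n$ paired with a wrong in-bin compression codeword forces the bin population $2^{n(I(\tilde Y;Y_1|UQ)-R)}$ below $2^{nI(\tilde Y;XY_2|UQ)}$, where $R$ is the bin rate; since $I(\tilde Y;Y_1|UQ)-I(\tilde Y;XY_2|UQ)=I(\tilde Y;Y_1|UXY_2Q)=I(\tilde Y;Y_1|UY_2Q)-I(X;\tilde Y|UY_2Q)$, this fixes the bin rate at the left-hand side of~\eqref{fbQ2inner2}, and, because the feedback link need only carry this single index, it gives exactly the feedback constraint~\eqref{fbQ2inner2}. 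The crucial point is that decoding $X^n$ and $\tilde Y^n$ \emph{jointly} makes $X^n$ act as decoder side-information for the compression, which is precisely what lowers the required feedback rate relative to Theorem~\ref{theo1}.

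Finally, the events in which the cloud center is wrong---so that $m_1$, and with it the forwarded bin index that Receiver~2 must also recover from the forward signal, are in error---produce the remaining bounds~\eqref{2b} on $R_1$ and~\eqref{2d} on $R_1+R_2$, once the bin rate $R=I(\tilde Y;Y_1|UXY_2Q)$ is substituted. I expect the main obstacle to be exactly this last step: tracking how the one bin index plays three roles at once---covering $Y_1$ at Receiver~1, traversing the feedback link, and being re-sent over the forward channel to Receiver~2---so that the mixed error events (a wrong message together with a wrong in-bin codeword) collapse, via the identities above, to the clean forms~\eqref{2b} and~\eqref{2d} with no residual dependence on $R$, all while the block-Markov/backward-decoding structure couples adjacent blocks. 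The remaining check is the routine one for such schemes, namely that the rate penalties from the initialization and termination blocks vanish as $B\to\infty$.
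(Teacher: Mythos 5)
The paper itself only sketches this proof (``backward decoding instead of sliding-window decoding; details omitted''), and your high-level plan---keep the Theorem~1 codebook and compress--bin--forward structure, but let Receiver~2 decode backwards and jointly---is indeed the intended approach. Your derivations of \eqref{2a}, of \eqref{2c}, and of the feedback constraint \eqref{fbQ2inner2} (via the identity $I(\tilde Y;Y_1|UQ)-I(\tilde Y;XY_2|UQ)=I(\tilde Y;Y_1|UXY_2Q)=I(\tilde Y;Y_1|UY_2Q)-I(X;\tilde Y|UY_2Q)$, which follows correctly from the Markov chain $\tilde Y - (U,Y_1,Q)-(X,Y_2)$) are sound.

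The gap is exactly where you predicted it: you never actually derive \eqref{2b} and \eqref{2d}, and the natural completion of your sketch does not give them. Two concrete problems. First, your decoder description is inconsistent: you say Receiver~2 searches for $(m_2,\tilde Y^n)$ ``given the decoded $U^n$,'' but if the cloud-center pair $(m_{1,b},l_{b-1})$ is decoded separately from $Y_2^n$ alone, the resulting constraint is $R_1+\tilde R< I(U;Y_2|Q)$, which reproduces the Theorem~1 bound \eqref{1b} rather than \eqref{2b}; to get $I(X;Y_2|Q)$ into the bound, the cloud center must be decoded \emph{jointly} with the satellite and the in-bin compression codeword. Second, and more seriously, if you do carry out that fully joint analysis in the standard way, the event ``wrong $(m_{1,b},l_{b-1})$'' involves a union over $2^{n(R_1+\tilde R)}$ cloud centers, $2^{nR_2}$ satellites, and $2^{n(\hat R-\tilde R)}$ in-bin codewords (with $\hat R\approx I(\tilde Y;Y_1|UQ)$ the covering rate), against a joint-typicality exponent of $I(X;Y_2|Q)+I(\tilde Y;XY_2|UQ)$. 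This yields
\begin{equation*}
R_1+R_2+\hat R < I(X;Y_2|Q)+I(\tilde Y;XY_2|UQ),
\end{equation*}
i.e.\ a bound on the \emph{sum} rate equal to the right-hand side of \eqref{2b}---strictly smaller than \eqref{2d} by $I(XY_2;\tilde Y|UQ)$, and a sum-rate bound where the theorem asserts a bound on $R_1$ alone. So the argument as sketched proves only a strictly smaller region; obtaining \eqref{2b} and \eqref{2d} requires an additional idea (a finer treatment of the wrong-cloud-center event) that neither you nor the paper supplies. Until that step is made explicit, the proposal does not establish the theorem.
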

\begin{proof}
The  scheme achieving $\set{R}_{\textnormal{in,2}}$ is similar to the  scheme achieving $\mathcal{R}_\textnormal{in,1}$ described in Section~\ref{sec:sche1}, but  Receiver~2 applies backward decoding as opposed to sliding-window decoding. Details and analysis are omitted.
\end{proof}

\begin{Remark} \label{remark1}
$\set{R}_\textnormal{in,1} \subseteq \set{R}_\textnormal{in,2}$.
\end{Remark}
This holds because Constraints (\ref{1a}) and (\ref{1c}) are equivalent to Constraints (\ref{2a}) and (\ref{2c}), respectively; Constraint (\ref{1b}) is stricter than Constraint (\ref{2b}); 
the combination of Constraints (\ref{1b}) and (\ref{1c}) is stricter than Constraint (\ref{2d}); finally, Constraint~\eqref{fbQ1} is stricter than Constraint~\eqref{fbQ2inner2}.

\begin{Remark}\label{remark2}
 For  $\tilde{Y}=$ const., both $\set{R}_\textnormal{in,1}$ and $\set{R}_\textnormal{in,2}$ specialize to $\set{C}_\textnormal{NoFB}$. Thus, both regions $\set{R}_\textnormal{in,1}$ and $\set{R}_\textnormal{in,2}$ include $\set{C}_\textnormal{NoFB}$.
\end{Remark}

\begin{Theorem}\label{theo3}
Assume $R_\textnormal{FB}>0$. For strictly less-noisy DMBCs: 
\begin{enumerate}
\item
every rate pair $(R_1>0,R_2>0)$ on the boundary of $\set{C}_{\textnormal{NoFB}}$ that is \emph{not} also on the boundary of $\set{C}_{\textnormal{Enh}}$ can be improved with rate-limited feedback:
\begin{IEEEeqnarray}{rCl}
\lefteqn{
(R_1>0, R_2>0) \in \big( \textnormal{bd}(\set{C}_\textnormal{NoFB}) \cap \textnormal{int}(\set{C}_\textnormal{Enh})
 \big)} \hspace{3cm}\nonumber \\& &
 \Rightarrow  (R_1, R_2)  \in  \textnormal{int}(\set{C}_\textnormal{FB}(R_\textnormal{FB})). \IEEEeqnarraynumspace
\end{IEEEeqnarray}
\item
whenever $\set{C}_{\textnormal{NoFB}}$ does not coincide with $\set{C}_{\textnormal{Enh}}$, then the feedback capacity  region is strictly larger than the no-feedback capacity region:
\begin{equation}
\big( \set{C}_{\textnormal{NoFB}} \subset \set{C}_{\textnormal{Enh}}\big) \Rightarrow \big( \set{C}_{\textnormal{NoFB}} \subset \set{C}_{\textnormal{FB}}(R_{\textnormal{FB}})\big).
\end{equation}
\end{enumerate}
 \end{Theorem}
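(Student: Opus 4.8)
The plan is to derive part~2) from part~1), and to prove part~1) by locally perturbing the scheme of Theorem~\ref{theo1} --- equivalently, the region of Corollary~\ref{rateCoro} --- around the no-feedback-optimal input distribution at the given boundary point. \emph{Reformulation.} Both $\set{C}_\textnormal{NoFB}$ and $\set{C}_\textnormal{FB}(R_\textnormal{FB})$ are convex and down-left closed (achievability of a pair implies achievability of every componentwise-smaller pair), and $\set{C}_\textnormal{NoFB}\subseteq\set{R}_\textnormal{in,1}\subseteq\set{C}_\textnormal{FB}(R_\textnormal{FB})$ by Theorem~\ref{theo1} and Remark~\ref{remark2}. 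It therefore suffices to produce, for the given $(R_1,R_2)$, a single achievable pair that strictly dominates it in both coordinates: if $(R_1+a,R_2+b)\in\set{C}_\textnormal{FB}(R_\textnormal{FB})$ with $a,b>0$, then down-left closedness yields the whole box $[0,R_1+a]\times[0,R_2+b]$, hence an open neighborhood of $(R_1,R_2)$, so $(R_1,R_2)\in\textnormal{int}(\set{C}_\textnormal{FB}(R_\textnormal{FB}))$.

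\emph{Construction.} Fix $(R_1,R_2)\in\textnormal{bd}(\set{C}_\textnormal{NoFB})\cap\textnormal{int}(\set{C}_\textnormal{Enh})$ with $R_1,R_2>0$ and let $P_{U^*X^*}$ attain it in \eqref{NoFB} (take $Q$ deterministic), so that $R_1=I(U^*;Y_1)$ and $R_2=I(X^*;Y_2|U^*)$. I evaluate Corollary~\ref{rateCoro} at $P_{U^*X^*}$ with a feedback variable $\tilde Y$ that reveals only a small amount $\epsilon>0$ of $Y_1$, i.e.\ $I(\tilde Y;Y_1|U^*Y_2)=\epsilon$. Because $R_1=I(U^*;Y_1)>0$, \emph{strict} less-noisiness yields a strictly positive gap $\delta:=I(U^*;Y_2)-I(U^*;Y_1)>0$, so the budget $\min\{R_\textnormal{FB},\delta\}$ in \eqref{fbQ2} is positive; for $\epsilon$ below it, constraint \eqref{1aQ} still delivers $R_1$ while \eqref{1cQ} raises the second rate to $R_2+I(X^*;\tilde Y|Y_2U^*)$.

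\emph{Crux.} It remains to make this $R_2$-gain strictly positive. For a suitable $\tilde Y$ one has $I(X^*;\tilde Y|Y_2U^*)>0$ precisely when $I(X^*;Y_1|Y_2U^*)>0$, and establishing this inequality is the heart of the argument. Strictness of less-noisiness is essential here: for physically degraded channels $X-Y_2-Y_1$ one has $I(X;Y_1|Y_2U)\equiv0$, $\set{C}_\textnormal{NoFB}=\set{C}_\textnormal{Enh}$, and feedback does not help, so mere less-noisiness cannot suffice. The mechanism I would use is contrapositive: if the no-feedback-optimal $P_{U^*X^*}$ satisfied $I(X^*;Y_1|Y_2U^*)=0$, then the \emph{same} input would attain $(R_1,R_2)$ in $\set{C}_\textnormal{Enh}$ as well; combined with a supporting direction $(\mu_1,\mu_2)$ of $\set{C}_\textnormal{NoFB}$ at $(R_1,R_2)$ --- which has $\mu_1,\mu_2>0$ because $R_1,R_2>0$ keeps $(R_1,R_2)$ away from the two boundary points on the axes --- this should force $(\mu_1,\mu_2)$ to support $\set{C}_\textnormal{Enh}$ there too, i.e.\ $(R_1,R_2)\in\textnormal{bd}(\set{C}_\textnormal{Enh})$, contradicting the hypothesis. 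The main obstacle is to make this last step airtight: I must rule out that the gap between the enhanced and no-feedback weighted-sum rates in direction $(\mu_1,\mu_2)$ is realized \emph{only} by inputs bounded away from the no-feedback maximizer. I expect to need a variational (KKT-type) comparison of the two weighted-sum-rate problems, exploiting that strict less-noisiness makes $Y_2$ strictly more informative than $Y_1$ about every non-degenerate auxiliary.

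\emph{Assembly and part~2).} Given $g:=I(X^*;\tilde Y|Y_2U^*)>0$, the pair $(R_1,R_2+g)$ is achievable. Let $(r^*,0)$ be the maximal-$R_1$ point of $\set{C}_\textnormal{NoFB}$; since $R_2>0$ we have $r^*>R_1$. For $\lambda\in\big(R_2/(R_2+g),\,1\big)$ the convex combination $\lambda(R_1,R_2+g)+(1-\lambda)(r^*,0)$ exceeds $(R_1,R_2)$ in both coordinates, so by the Reformulation $(R_1,R_2)\in\textnormal{int}(\set{C}_\textnormal{FB}(R_\textnormal{FB}))$, proving part~1). For part~2), suppose $\set{C}_\textnormal{NoFB}\subsetneq\set{C}_\textnormal{Enh}$. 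Revealing $Y_1$ to Receiver~2 leaves the $R_1$-constraint in \eqref{NoFB} unchanged, so the two regions share the maximal-$R_1$ point $(r^*,0)$; being convex and down-left closed, if they differ they must differ along a sub-arc of $\textnormal{bd}(\set{C}_\textnormal{NoFB})$ with $R_1\in(0,r^*)$, where the $\set{C}_\textnormal{NoFB}$-boundary lies strictly below that of $\set{C}_\textnormal{Enh}$. Any Pareto point there has $R_1,R_2>0$ and lies in $\textnormal{int}(\set{C}_\textnormal{Enh})$, so part~1) places it in $\textnormal{int}(\set{C}_\textnormal{FB}(R_\textnormal{FB}))$, giving $\set{C}_\textnormal{NoFB}\subset\set{C}_\textnormal{FB}(R_\textnormal{FB})$.
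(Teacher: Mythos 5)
There is a genuine gap, and it sits exactly where you flag it: the ``Crux.'' Your plan keeps the no-feedback-optimal input $P_{U^*X^*}$ fixed and tries to show that $I(X^*;Y_1|Y_2U^*)>0$, so that a weak compression $\tilde Y$ of $Y_1$ buys a strictly positive $R_2$-gain. But the hypothesis $(R_1,R_2)\in\textnormal{int}(\set{C}_\textnormal{Enh})$ is a statement about the \emph{region}, not about the particular maximizer $P_{U^*X^*}$: the enhanced region can strictly dominate $(R_1,R_2)$ through input distributions far from $P_{U^*X^*}$, while the no-feedback-optimal input itself satisfies $I(X^*;Y_1|Y_2U^*)=0$. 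Your contrapositive only shows that $P_{U^*X^*}$ then attains $(R_1,R_2)$ in the parametric description of $\set{C}_\textnormal{Enh}$, which does not place $(R_1,R_2)$ on $\textnormal{bd}(\set{C}_\textnormal{Enh})$; the KKT-type comparison you say you ``expect to need'' is precisely the missing argument, and strict less-noisiness by itself does not obviously supply it. Note also that strict less-noisiness is a condition on $I(U;Y_2)$ versus $I(U;Y_1)$ for auxiliaries $U$, and gives no direct control over the conditional quantity $I(X;Y_1|Y_2U)$.

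The paper avoids this difficulty by never perturbing at the same input distribution. It uses the interiority hypothesis only to pick a point $(R_1^{(2)},R_2^{(2)})\in\textnormal{bd}(\set{C}_\textnormal{Enh})$ that strictly dominates $(R_1^{(1)},R_2^{(1)})$ in both coordinates, achieved by some (generally different) pmf $P_{U^{(2)}X^{(2)}}$ of the enhanced channel, and then time-shares: a fraction $\gamma$ of the time it runs the scheme with $\tilde Y^{(2)}=Y_1^{(2)}$ (full feedback of $Y_1$), and a fraction $1-\gamma$ with $\tilde Y^{(1)}$ constant (no feedback). The resulting pair $\big((1-\gamma)R_1^{(1)}+\gamma R_1^{(2)},\,(1-\gamma)R_2^{(1)}+\gamma R_2^{(2)}\big)$ strictly dominates $(R_1^{(1)},R_2^{(1)})$ for every $\gamma\in(0,1)$, so no lower bound on the gain is needed; the only thing to check is the feedback constraint \eqref{fbQ2} of Corollary~\ref{rateCoro}, namely $\gamma H(Y_1^{(2)}|Y_2^{(2)}U^{(2)})\leq\min\{R_\textnormal{FB},\gamma I^{(2)}+(1-\gamma)I^{(1)}\}$, and this is where strict less-noisiness is actually used: it guarantees $I^{(1)}=I(U^{(1)};Y_2^{(1)})-I(U^{(1)};Y_1^{(1)})>0$ (since $R_1^{(1)}=I(U^{(1)};Y_1^{(1)})>0$), so the right-hand side stays bounded away from zero as $\gamma\to0$ while the left-hand side vanishes. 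Your ``Reformulation'' and your derivation of part~2) from part~1) are fine and match the paper in spirit, but part~1) as you outline it is not a proof until the crux is closed; the time-sharing construction is the idea you are missing.
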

\begin{proof}
Statement 1) is proved in Section \ref{sec:proofTheorem3}. Statement 2) follows directly from Statement 1).
\end{proof}
(Notice that for physically degraded DMBCs, $\set{C}_{\textnormal{NoFB}} = \set{C}_{\textnormal{Enh}}$ and in fact, feedback does not increase capacity.)
\begin{Corollary}
For BS-BCs with cross-over probabilities $p_1, p_2$ that satisfy $1/2>p_1>p_2>0$, rate-limited feedback increases the entire capacity region irrespective of  $R_{\textnormal{FB}}>0$.

The same statement holds also for BE-BCs with erasure probabilities $\delta_1, \delta_2$ that satisfy $1>\delta_1>\delta_2>0$.
\end{Corollary}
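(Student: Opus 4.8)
The plan is to derive this corollary from Statement~1 of Theorem~\ref{theo3}: it suffices to show that, for both families of channels, \emph{every} boundary point $(R_1,R_2)$ of $\set{C}_\textnormal{NoFB}$ with $R_1>0$ and $R_2>0$ lies in $\textnormal{int}(\set{C}_\textnormal{Enh})$, because Statement~1 then places all such points in $\textnormal{int}(\set{C}_\textnormal{FB}(R_\textnormal{FB}))$ for every $R_\textnormal{FB}>0$, i.e.\ the entire nontrivial boundary of $\set{C}_\textnormal{NoFB}$ is strictly improved. Two ingredients are required: (i) the channel is strictly less-noisy, so that Theorem~\ref{theo3} applies; and (ii) the part of $\textnormal{bd}(\set{C}_\textnormal{NoFB})$ with $R_1,R_2>0$ is contained in $\textnormal{int}(\set{C}_\textnormal{Enh})$.

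For (i) I would recall that the ordered BS-BC ($p_2<p_1<1/2$) and the ordered BE-BC ($\delta_2<\delta_1$) are stochastically degraded: $Y_1$ is a further-corrupted version of $Y_2$ through an independent BSC of crossover $(p_1-p_2)/(1-2p_2)\in(0,1/2)$, respectively through an independent additional erasure of probability $(\delta_1-\delta_2)/(1-\delta_2)\in(0,1)$. One then checks that the data-processing step $I(U;Y_2)\geq I(U;Y_1)$ is strict whenever $I(U;Y_1)>0$, which is exactly the strict less-noisy property asserted in Section~I.

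For (ii), the key observation is that $\set{C}_\textnormal{NoFB}$ and $\set{C}_\textnormal{Enh}$ are described by the \emph{same} rate-$R_1$ constraint $R_1\leq I(U;Y_1)$ and differ only in the rate-$R_2$ constraint, $R_2\leq I(X;Y_2|U)$ versus $R_2\leq I(X;Y_1Y_2|U)$. Hence it is enough to prove the strict inequality
\[
I(X;Y_1Y_2|U)-I(X;Y_2|U)=I(X;Y_1|Y_2U)>0
\]
at every input law $P_{UX}$ achieving a boundary point of $\set{C}_\textnormal{NoFB}$ with $R_2=I(X;Y_2|U)>0$. Given this, the boundary point $(I(U;Y_1),I(X;Y_2|U))$ has its $R_2$-coordinate strictly below the largest $R_2$ that $\set{C}_\textnormal{Enh}$ permits at $R_1=I(U;Y_1)$, while $R_1$ is strictly between $0$ and its maximal value $\max_U I(U;Y_1)$ (common to both regions, and attained only with $R_2=0$). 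Continuity of the boundary of the convex set $\set{C}_\textnormal{Enh}$ then places the point in $\textnormal{int}(\set{C}_\textnormal{Enh})$.

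The main obstacle is the strict inequality in (ii), which I would reduce to showing that $X$ is not a deterministic function of $(Y_2,U)$: since $Y_1$ depends on $X$ even conditionally on $(Y_2,U)$, one has $I(X;Y_1|Y_2U)>0$ precisely when $H(X|Y_2U)>0$. For the BS-BC, $Y_2=X\oplus Z_2$ with $0<p_2<1/2$ is a nondegenerate observation, so $H(X|Y_2U)>0$ whenever $H(X|U)>0$; and $H(X|U)>0$ is forced by $I(X;Y_2|U)>0$. For the BE-BC the same conclusion holds because, conditioned on $Y_2$ being erased (an event of positive probability $\delta_2$), $Y_1$ still reveals $X$ with positive probability $1-\delta_1$, so again $H(X|Y_2U)>0$ whenever $H(X|U)>0$. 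This establishes (ii) for both families, and the corollary follows from Theorem~\ref{theo3}.
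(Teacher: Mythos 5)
Your proposal is correct and follows the route the paper intends: the Corollary is stated without proof as a consequence of Theorem~\ref{theo3}, and you supply exactly the two verifications that are left implicit, namely that ordered BS-BCs and BE-BCs are strictly less-noisy (via their stochastic degradedness and strictness of data processing through a nondegenerate BSC or erasure stage), and that every boundary point of $\set{C}_\textnormal{NoFB}$ with $R_1,R_2>0$ lies in $\textnormal{int}(\set{C}_\textnormal{Enh})$ because $I(X;Y_1|Y_2U)>0$ whenever $H(X|U)>0$. The convexity/triangle argument placing such points in the interior of $\set{C}_\textnormal{Enh}$ is sound, so no gap remains.
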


\section{Proof of Theorem \ref{theo1} }\label{sec:sche1}
We first describe a scheme achieving $\set{R}_{\textnormal{in,1}}$ for $|\set{Q}|=1$ (Section~\ref{subsec:sche1}). Due to space limitations, we only sketch the required modifications for $|\set{Q}|\geq 2$ (Section~\ref{subsec:sche2}).
\subsection{Scheme achieving rate region $\set{R}_\textnormal{in,1}$ for $|\set{Q}|=1$}\label{subsec:sche1}

Let $\epsilon>0$. Fix a pmf $P_{U}P_{X|U}P_{Y_1Y_2|X}P_{\tilde{Y}|UY_1}$ and positive rates $R_1$ and $R_2$ such that the constraints in (\ref{rate1}) and~\eqref{fbQ1} hold with strict inequality. Choose a positive rate $\tilde{R}$ that satisfies
 \begin{IEEEeqnarray}{rCl}\label{RateconsFB}
    I(\tilde{Y};Y_1|UY_2)+\epsilon&\leq & \tilde{R}\leq  R_\textnormal{FB}.
  \end{IEEEeqnarray}

Transmission takes place over $B+1$ consecutive blocks, with length $n$ for each block. We denote the input and output sequences in block $b\in\{1,\ldots,B+1\}$ by $X^n_{b}, Y^n_{1,b},Y^n_{2,b}$, respectively. The messages to be sent are in a product form $M_i  = (M_{i,1},\ldots,M_{i,B})$, for $i\in\{1,2\}$, where each $M_{i,b}$ is uniformly distributed over the set $\set{M}^{(n)}_i:=\{1,\ldots,\lfloor 2^{nR_i}\rfloor\}$. Let $\tilde{\set{M}}:=\{1,\ldots,\lfloor2^{n\tilde{R}}\rfloor\}$ and $\set{M}:=\{1,\ldots,\lfloor2^{n\hat{R}}\rfloor\}$ with $\hat{R}:=I(\tilde{Y};Y_1|U)+\epsilon$.

\textit{1) Codebook generation}:
For each block $b$,  randomly and independently generate $2^{n(R_1+\tilde{R})}$ sequences $u_b^n(m_{1,b},l_{b-1})$, for $m_{1,b}\in \set{M}^{(n)}_1$ and $l_{b-1}\in \tilde{\set{M}}$. Each sequence $u_b^n(m_{1,b},l_{b-1})$ is drawn according to the product distribution $\prod_{t=1}^nP_U(u_{b,t})$, where $u_{b,t}$ denotes the $t$-th entry of $u_b^n(m_{1,b},l_{b-1})$.

For each pair $(m_{1,b},l_{b-1})$, randomly and conditionally independently generate $2^{nR_2}$ sequences $x_b^n\big(m_{2,b}|(m_{1,b},l_{b-1})\big)$, for $m_{2,b}\in \set{M}^{(n)}_2$. Each sequence $x_b^n\big(m_{2,b}|(m_{1,b},l_{b-1})\big)$ is drawn according to the product distribution $\prod_{t=1}^nP_{X|U}(x_{b,t}|u_{b,t})$, where $x_{b,t}$ denotes the $t$-th entry of $x_b^n\big(m_{2,b}|(m_{1,b},l_{b-1})\big)$.

For each pair $(m_{1,b},l_{b-1})$, randomly and conditionally independently generate $2^{n\hat{R}}$ sequences $\tilde{y}_b^n\big(m_{b}|(m_{1,b},l_{b-1})\big)$, for $m_{b}\in \set{M}$. Each sequence $\tilde{y}_b^n\big(m_{b}|(m_{1,b},l_{b-1})\big)$ is drawn according to the product distribution $\prod_{t=1}^nP_{\tilde{Y}|U}(\tilde{y}_{b,t}|u_{b,t})$, where $\tilde{y}_{b,t}$ denotes the $t$-th entry of $\tilde{y}_b^n\big(m_{b}|(m_{1,b},l_{b-1})\big)$. Partition $\set{M}$ into $2^{n\tilde{R}}$ equal-size subsets referred as bins $\set{B}(l_b)=\{(l_b-1)2^{n(\hat{R}-\tilde{R})}+1,\ldots,l_b2^{n(\hat{R}-\tilde{R})}\}$.

All codebooks are revealed to  transmitter and  receivers. %

\textit{2) Transmitter}:
To simplify notation, define $(l_0,m_{1,B+1},m_{2,B+1},m_{B+1}):=(1,1,1,1)$.

For each block $b\in\{1,\ldots, B+1\}$ and given that $M_{1,b}=m_{1,b}$ and $M_{2,b}=m_{2,b}$ and that the feedback message in block $b-1$ is $L_{b-1}=l_{b-1}$, 
 the transmitter sends
\begin{equation}\label{encoding}x_b^n(m_{2,b}|m_{1,b},l_{b-1}).\end{equation}
(The generation of the feedback signal $L_{b-1}=l_{b-1}$ sent in block $b-1$ is described shortly.)

\textit{3) Receiver~1}:
In each block $b\in\{1,..,B+1\}$, after observing channel outputs $Y^n_{1,b}=y^n_{1,b}$, Receiver~1 first looks for an index $\hat{m}_{1,b}\in \set{M}^{(n)}_1$  that satisfies
\begin{equation} \label{decodeM1}
\big(u^n_b(\hat{m}_{1,b},{l}_{b-1}), y_{1,b}^n\big)\in \set{T}_\epsilon^n(P_{UY_1}).
\end{equation}
Notice that Receiver~1 knows the feedback message $l_{b-1}$, because it has generated it itself in the previous block~$(b-1)$.

Next, it looks for a compression message ${m}_b\in {\set{M}}$  that satisfies
\begin{equation} \label{compressY1}
\big(u^n_b(\hat{m}_{1,b},{l}_{b-1}),\tilde{y}_b^n({m}_b|\hat{m}_{1,b},{l}_{b-1}), y_{1,b}^n\big)\in \set{T}_\epsilon^n(P_{U\tilde{Y}Y_1}),
\end{equation}
and feeds back the index $l_b$ of the bin containing $m_b$, i.e. it feeds back $l_b$ if $m_b\in \set{B}(l_b)$. Thus, Receiver~1 only sends a feedback signal in the last channel use of each block, otherwise it stays silent. By (\ref{RateconsFB}) and because the message $l_b$ is of rate $\tilde{R}$, our scheme satisfies the average feedback-rate constraint~(\ref{consFB0}).

After decoding block $B+1$, Receiver 1 produces the product message
$\hat{m}_1=(\hat{m}_{1,1},\ldots,\hat{m}_{1,B})$ as its guess.

\textit{4) Receiver~2}:
For each block $b\in\{1,\ldots,B+1\}$, after observing $Y_{2,b}^n=y_{2,b}^n$, Receiver 2 looks for a pair $(\hat{m}'_{1,b},\hat{l}_{b-1})\in \set{M}^{(n)}_1\times \tilde{\set{M}}$ that satisfies
\begin{equation}\label{eq:u2}
\big(u_b^n(\hat{m}'_{1,b},\hat{l}_{b-1}),y^n_{2,b}\big)\in \set{T}_\epsilon^n(P_{UY_2}).
\end{equation}
Then, it looks for the compression message $\hat{m}_{b-1}\in \set{B}\big(\hat{l}_{b-1}\big)$ that satisfies
\begin{IEEEeqnarray}{rCl}
\big(u_{b-1}^n&&(\hat{m}'_{1,b-1},\hat{l}_{b-2}),y^n_{2,b-1}, \nonumber\\
&&\quad\quad \tilde{y}_{b-1}^n(\hat{m}_{b-1}|\hat{m}'_{1,b-1},\hat{l}_{b-2})\big)\in \set{T}_\epsilon^n(P_{U\tilde{Y}Y_2}),\IEEEeqnarraynumspace\label{eq:y21}
\end{IEEEeqnarray}
and finally searches for an index $\hat{m}_{2,b-1}\in \set{M}^{(n)}_2$ that satisfies
\begin{IEEEeqnarray}{rCl}\label{eq:y2}
&&\big(u^n_{b-1}(\hat{m}'_{1,b-1},\hat{l}_{b-2}),x^n_{b-1}(\hat{m}_{2,b-1}|\hat{m}'_{1,b-1},\hat{l}_{b-2}), \nonumber\\&&\hspace{0.6cm}\tilde{y}_{b-1}^n(\hat{m}_{b-1}|\hat{m}'_{1,b-1},\hat{l}_{b-2}), y_{2,b-1}^n\big)\in \set{T}_\epsilon^n(P_{UX\tilde{Y}Y_2}).\IEEEeqnarraynumspace
\end{IEEEeqnarray}
After decoding block $B+1$, Receiver 2 produces the product message
$\hat{m}_2=(\hat{m}_{2,1},\ldots,\hat{m}_{2,B})$ as its guess.

\textit{5) Analysis:} 
Using standard typicality arguments 
one can show that the average probability of error of the scheme, \eqref{errorprob},
(averaged over the random messages, the random channel realization and the random code construction) tends to zero as the blocklength $n$ tends to infinity, whenever
\begin{subequations}\label{method}
\begin{IEEEeqnarray}{rCl}
{R}_1&<&I(U;Y_1)-\delta(\epsilon)\\
\tilde{R}&>&I(\tilde{Y};Y_1|U,Y_2)+\delta(\epsilon)\\
{R}_1+\tilde{R}&<&I(U;Y_2)-\delta(\epsilon)\\
{R}_2&<&I(X;\tilde{Y}Y_2|U)-\delta(\epsilon)
\end{IEEEeqnarray}
\end{subequations}
for some function $\delta(\epsilon)$ that tends to 0 as $\epsilon$ tends to 0. This implies that there also is a deterministic code with probability of error tending to 0 as $n$ tends to infinity whenever constraints~\eqref{method} are satisfied.

Applying the Fourier-Motzkin algorithm to the constraints in~\eqref{RateconsFB} and~\eqref{method} to eliminate the rate $\tilde{R}$, and letting $\epsilon$ tend to 0 and the blocklength $B$ to infinity,  establishes the achievability of the region $\set{R}_{in,1}$ when $|\set{Q}|=1$. (Notice that for a finite $B$ the rates of transmission are $\frac{B}{B+1}R_1$ and $\frac{B}{B+1}R_2$.)

\subsection{Sketch of scheme achieving rate region $\set{R}_\textnormal{in,1}$ for $|\set{Q}|> 1$}\label{subsec:sche2}

Let $\set{Q}=\{1,\ldots, |\set{Q}|\}$. Our scheme consists of the phases $1,\ldots, |\set{Q}|$. In each phase~$q\in\set{Q}$, we apply the scheme from the previous subsection~\ref{subsec:sche1}, but where now, the transmitter can delay the transmission of the compression messages $\{l_b\}$ (or parts of them) to subsequent  phases. More specifically:

 For each phase $q\in\set{Q}$, the transmitter and Receiver~2 each have a first-input first-output (FIFO)  queue. 
At the beginning of each block $b$ of  phase $q$, the transmitter stores  in its FIFO queue a bit representation of the feedback-signal $l_{q,b-1}$ it observed in the previous block. To perform the encoding, it retrieves the first $n_q{R}^{(q)}_{\textnormal{fw}}$ bits from this queue, where $n_q$ denotes the blocklength in phase $q$ and $R^{(q)}_{\textnormal{fw}}$ is a new parameter of the scheme. It then follows the encoding described in Section~\ref{subsec:sche1} but where $l_{b-1}$ in~\eqref{encoding} needs to be replaced by  $k_{q,b-1}$, the index corresponding to the retrieved bits, and in general the subscript $b$ must be replaced by the pair $q,b$.

Receiver~2 decodes $\hat{m}_{1,q,b}'$ and $\hat{k}_{q,b-1}$ as in~\eqref{eq:u2} (but where $l_{b-1}$ is replaced by $k_{q,b-1}$ and generally the subscript $b$ is replaced by $q,b$) and stores the decoded index $\hat{k}_{q,b-1}$ in its  FIFO queue. Once the FIFO queue contains all pieces to reconstruct the feedback message $\hat{l}_{q,b-1}$, Receiver~2 decodes the submessage $m_{2,q,b-1}$ as in~\eqref{eq:y21} and \eqref{eq:y2}. 
Receiver~1 performs the same operations as in the scheme in Section~\ref{subsec:sche1},  but where $l_{b-1}$ in~\eqref{decodeM1} and \eqref{compressY1} needs to be replaced by $k_{q,b-1}$ and the subscript $b$ by  $q,b$.
(Receiver~1 knows $\{k_{q,b-1}\}$ as it can simulate the transmitter's  queue.)


We sketch the main points of the analysis. For each $q\in\set{Q}$, let $P_{U|Q=q}P_{X|UQ=q}P_{Y_1Y_2|X}P_{\tilde{Y}|UY_1Q=q}$ denote the distribution used in the code construction of phase $q$. Also, let $R_1^{(q)}$ and $R_2^{(q)}$ denote the rates of transmission of the messages sent in phase $q$ and  $\tilde{R}^{(q)}$ denote the rate of the compression messages $\{l_{q,b}\}_{b=1}^B$. The blocklength in phase $q$ is chosen as $n_q:=P_Q(q)n$ for some large positive integer $n$, where $P_Q(q)$ is a pmf  over $\set{Q}$. For simplicity of exposition, assume that 
the labeling of the elements in $\set{Q}$ is such that
\begin{IEEEeqnarray}{rCl}
\sum_{1\leq q'\leq q}\tilde{R}^{(q')} P_Q(q') &\geq  &\sum_{1\leq q' \leq q} {R}^{(q')}_{\textnormal{fw}} P_Q(q')\label{eq:consQb}
\end{IEEEeqnarray}
holds for all $q\in\set{Q}$. Inequality~\eqref{eq:consQb} ensures that prior to each block of phase $q$ the transmitter finds $n_q R_{\textnormal{fw}}^{(q)}$ bits in its queue. 

Now,  if 
 \begin{equation}
\sum_{q\in\set{Q}}\tilde{R}^{(q)} P_Q(q) =  \sum_{q\in\set{Q}} {R}^{(q)}_{\textnormal{fw}} P_Q(q)\label{eq:lastq}\IEEEeqnarraynumspace\end{equation}
then, at the end of the last phase~$|\set{Q}|$, Receiver~2 has reconstructed all feedback signals $\{\hat{l}_{q,b}\}$. Assuming that Conditions~\eqref{eq:consQb} and \eqref{eq:lastq} hold, if  also the following Conditions~\eqref{eq:1}--\eqref{eq:consQa} are  satisfied for each $q\in\set{Q}$:
 \begin{IEEEeqnarray}{rCl}
R^{(q)}_{1} & < &I(U;Y_1|Q=q) \label{eq:1} \\
R^{(q)}_{1} +R^{(q)}_{\textnormal{fw}} & < & I(U;Y_2|Q=q) \\ 
R^{(q)}_2 &< & I(X;Y_2 \tilde{Y}_1|U Q=q)\\
\tilde{R}^{(q)} &> & I(\tilde{Y};Y_1|UY_2Q=q), \label{eq:consQa} 
\end{IEEEeqnarray}
then  the probability of error tends to 0 as $N\to \infty$. 
Notice that our scheme satisfies the feedback-rate constraint~\eqref{consFB0}, whenever
\begin{equation}\label{eqffb}
 \sum_{q\in\set{Q}}\tilde{R}^{(q)} \leq  R_{\textnormal{FB}}.
 \end{equation}
From Constraints~\eqref{eq:lastq}--\eqref{eqffb}, we obtain the achievable region in Theorem~\ref{theo1}, for example by applying the Fourier-Motzkin algorithm (where we can relax the equality in~\eqref{eq:lastq} to an $\leq$-inequality).

\section{Proof of Theorem \ref{theo3}}\label{sec:proofTheorem3}
Assume that $R_{\textnormal{FB}}>0$.
 Fix  $(R_1^{(1)}>0,R_2^{(1)}>0)$ such that 
 \begin{equation}
(R_1^{(1)},R_2^{(1)})\in \big(\textnormal{bd}(\set{C}_\textnormal{NoFB})\cap {\textnormal{int}(\set{C}_\textnormal{Enh})}\big).
 \end{equation}
(This implies  $R_1^{(1)}>0$ and $R_2^{(1)}>0$.) Since $(R^{(1)}_1, R^{(1)}_2)\in \textnormal{bd}(\set{C}_\textnormal{NoFB})$, there exists a pmf $P_{U^{(1)}X^{(1)}}$  satisfying
 \begin{subequations}
   \begin{IEEEeqnarray}{rCl}
   R_1^{(1)}&:=&I(U^{(1)};Y_1^{(1)})\\
      R^{(1)}_2&:=&I(X^{(1)};Y_2^{(1)}|U^{(1)})
 \end{IEEEeqnarray}
 \end{subequations}
where $(Y_1^{(1)},Y_2^{(1)})\sim P_{Y_1Y_2|X}$, given $X^{(1)}$. Now, since $(R^{(1)}_1, R^{(1)}_2)\in \textnormal{int}(\set{C}_\textnormal{Enh})$, we can find
$(R^{(2)}_1, R^{(2)}_2) \in \textnormal{bd}(\set{C}_\textnormal{Enh})$ satisfying
\begin{IEEEeqnarray}{rCl}\label{Q2vQ1}
R^{(2)}_1>R_1^{(1)}\quad \textnormal{and} \quad  R^{(2)}_2>R^{(1)}_2.
 \end{IEEEeqnarray}
 Since $(R^{(2)}_1, R^{(2)}_2) \in \textnormal{bd}(\set{C}_\textnormal{Enh})$, there is a pmf $P_{U^{(2)}X^{(2)}}$ satisfying
  \begin{subequations}
   \begin{IEEEeqnarray}{rCl}
   R^{(2)}_1&:=&I(U^{(2)};Y_1^{(2)})\\
      R^{(2)}_2&:=&I(X^{(2)};Y_1^{(2)}Y_2^{(2)}|U^{(2)}),
 \end{IEEEeqnarray}
 \end{subequations}
 where $(Y_1^{(2)},Y_2^{(2)})\sim P_{Y_1Y_2|X}$, given $X^{(2)}$.

Choose now $\gamma \in (0,1)$ to satisfy
 \begin{IEEEeqnarray}{rCl}\label{gamma}
\gamma{H^{(2)}}&\leq& \min\{ {R_\textnormal{FB}}, \gamma I^{(2)} + \bar{\gamma}I^{(1)}\} \label{gammaR} \end{IEEEeqnarray}
with
\begin{IEEEeqnarray*}{rCl}
 H^{(2)} &:=& H(Y_1^{(2)}|Y_2^{(2)}U^{(2)}) \\
 I^{(q)} &:=& I(U^{(q)};Y_2^{(q)})-I(U^{(q)};Y_1^{(q)}),~\text{for}~q=1,2. 
 \end{IEEEeqnarray*}
Such a $\gamma$ exists because $R_{\textnormal{FB}}$, $H^{(2)}$, $I^{(1)}$,  and $I^{(2)}$ are positive. (Here, $I^{(1)}$ and $I^{(2)}$ are positive because the DMBC is strictly less-noisy, see (\ref{LN}).)
Then, introduce an independent  random variable $Q$ with
  \begin{IEEEeqnarray}{rCl}
  P_Q(q)=\left\{
  \begin{array}{ll}
1-\gamma, & q=1\\
  \gamma,  & q=2
  \end{array} \right.
 \end{IEEEeqnarray}
and  define
\begin{subequations}\label{eq:pointout}
\begin{IEEEeqnarray}{rClCl}\label{barRQ}
{R}'_1&:=&(1-\gamma)R^{(1)}_1+\gamma R^{(2)}_1
&= & I(U^{(Q)};Y_1^{(Q)}|Q)\\
{R}'_2&:=&(1-\gamma)R^{(1)}_2+\gamma R^{(2)}_2
         &=&I(X^{(Q)};\tilde{Y}^{(Q)}Y_2^{(Q)}|U^{(Q)}Q)\nonumber \\\end{IEEEeqnarray}
 \end{subequations}
with $\tilde{Y}^{(2)}=Y_1^{(2)}$ and $\tilde{Y}^{(1)}$ being a constant.
By (\ref{Q2vQ1}) and since $0<\gamma<1$, we have
\begin{IEEEeqnarray}{rCl}\label{eq:strict}
{R}'_1>R^{(1)}_1\quad \textnormal{and} \quad {R}'_2>R^{(1)}_2.
\end{IEEEeqnarray}
In the rest of the proof we show that the pair $(R_1', R_2')\in\set{C}_{\textnormal{FB}}(R_{\textnormal{FB}})$. Combined with~\eqref{eq:strict} this establises that $(R_1^{(1)}, R_2^{(1)})$ must lie in the interior of $(\set{C}_\textnormal{FB}(R_\textnormal{FB}))$.

Notice first that the pmf of the tuple $(U^{(Q)}, X^{(Q)}, Y_1^{(Q)}, Y_2^{(Q)}, \tilde{Y}^{(Q)}, Q)$
is of  the form
\begin{equation}\label{fbMC}
P_Q P_{U^{(Q)}|Q} P_{X^{(Q)}|U^{(Q)} Q} P_{Y_1^{(Q)} Y_2^{(Q)}|X^{(Q)}} P_{\tilde{Y}^{(Q)}| U^{(Q)}Y_1^{(Q)}Q}
\end{equation}
where
$P_{Y_1^{(Q)}Y_2^{(Q)}|X^{(Q)}}$ coincides with the channel law $P_{Y_1,Y_2|X}$.  This statement holds because by construction it is satisfied given
$Q=1$ and given $Q=2$, and because $Q$ is independent of $\{ (U^{(q)}, X^{(q)}, Y_1^{(q)}, Y_2^{(q)}, \tilde{Y}^{(q)}) \}_{q=1}^2$.

Since $\tilde{Y}^{(1)}=\textnormal{const.}$ and $\tilde{Y}^{(2)}=Y_1^{(2)}$ and by (\ref{gammaR})
\begin{IEEEeqnarray}{rCl}\label{eq:fbstrict}
\lefteqn{
I(\tilde{Y}^{(Q)};Y^{(Q)}_1|U^{(Q)}Y^{(Q)}_2Q)}\nonumber \\
                   &=& \gamma I(\tilde{Y}^{(Q)};Y^{(Q)}_1|U^{(Q)}Y^{(Q)}_2,Q=2) \nonumber \\
   &=&  \gamma      H^{(2)} \nonumber\\
   &\leq &\min\{ R_\textnormal{FB} , I(U^{(Q)};Y_2^{(Q)}|Q)-I(U^{(Q)};Y_1^{(Q)}|Q)\}.\IEEEeqnarraynumspace
\end{IEEEeqnarray}


The fact that $(R_1', R_2')\in\set{C}_{\textnormal{FB}}(R_{\textnormal{FB}})$ follows now by (\ref{eq:pointout}), (\ref{fbMC}), and (\ref{eq:fbstrict}), and by  Corollary~\ref{rateCoro}.


\section{Extension: Noisy Feedback}
The results obtained in Section~\ref{sec:results} (Theorems~\ref{theo1}, \ref{theo2}, and \ref{theo3}) apply also to the related setup where the feedback link is a noisy channel of capacity $R_{\textnormal{FB}}$. For example, to achieve the rates in Theorem~\ref{theo1}, the transmitter alternates between sending the blocks of two independent instances (with same parameters) of the scheme in Section~\ref{sec:sche1}. During each of these blocks, Receiver~1 sends the feedback message pertaining to the preceeding block (that belongs to the other scheme) using a code that achieves the capacity $R_{\textnormal{FB}}$ of the feedback link. Decoding at the receivers is performed as before. The transmitter now has to decode the compression message sent over the feedback link, which can be erroneous. However, it can be shown that these additional error events do not change the set of achievable rates.

\section*{Acknowledgement}
This work has been supported by the city of Paris under the program ``Emergences".

\end{document}